\documentclass{article}

\usepackage{arxiv}

\usepackage[utf8]{inputenc} % allow utf-8 input
\usepackage[T1]{fontenc}    % use 8-bit T1 fonts
\usepackage{hyperref}       % hyperlinks
\usepackage{url}            % simple URL typesetting
\usepackage{booktabs}       % professional-quality tables
\usepackage{amsfonts}       % blackboard math symbols
\usepackage{nicefrac}       % compact symbols for 1/2, etc.
\usepackage{microtype}      % microtypography
\usepackage{lipsum}

\usepackage{cite}
\usepackage{amsmath,amssymb,amsfonts}
\usepackage{graphicx}
\usepackage{algorithm}
\usepackage[noend]{algpseudocode}
\usepackage{hyperref}
\usepackage{textcomp}
\usepackage{threeparttablex} % for table footnotes

\usepackage{mathtools}
\usepackage{caption}
\usepackage{float}
\usepackage{stmaryrd}
\usepackage{epstopdf}
\usepackage{multirow}
\usepackage{pifont}
\usepackage{caption}
\usepackage{subcaption}

\newcommand{\R}{{\mathbb{R}}}

\newcommand{\Cs}{{\mathcal{C}}}
\newcommand{\N}{{\mathbb{N}}}

\newcommand{\X}{{\mathbf{X}}}

\newcommand{\T}{{\mathbf{T}}}
\newcommand{\So}{{\mathbf{S}}}
\newcommand{\Obs}{{\mathcal{U}}}

\newcommand{\U}{{\mathbf{U}}}

\newcommand{\cmark}{\ding{51}}%
\newcommand{\xmark}{\ding{55}}%

\newtheorem{theorem}{Theorem}[section]

\newtheorem{definition}[theorem]{Definition}

\newtheorem{remark}[theorem]{Remark}
\newtheorem{problem}[theorem]{Problem}
\newtheorem{proof}[theorem]{Proof}

\title{Control Barrier Functions for Prescribed-time Reach-Avoid-Stay Tasks using Spatiotemporal Tubes
%%%% Update your official citation here when published 
% \thanks{\textit{\underline{Citation}}: 
% \textbf{Authors. Title. Pages.... DOI:000000/11111.}} 
\thanks{ This work was supported in part by the SERB Start-Up Research Grant; in part by the ARTPARK. The work of Ratnangshu Das was supported by the Prime Minister’s Research Fellowship from the Ministry of Education, Government of India.}
}

\author{
 Ratnangshu Das \\
  Robert Bosch Centre for Cyber-Physical Systems\\
  IISc, Bengaluru, India\\
  \texttt{ratnangshud@iisc.ac.in} \\
  %% examples of more authors
   \And
 Pranav Bakshi \\
  Mechanical Engineering Department\\
  IIT, Kharagpur, India\\
  \texttt{pranavbakshirg4@kgpian.iitkgp.ac.in} \\
  \And
 Pushpak Jagtap \\
  Robert Bosch Centre for Cyber-Physical Systems\\
  IISc, Bengaluru, India\\
  \texttt{pushpak@iisc.ac.in} \\
% \thanks{R. Das, A. Basu and P. Jagtap are with Robert Bosch Centre for Cyber-Physical Systems, IISc, Bangalore, India {\tt\small \{ratnangshud,ahanbasu,pushpak\}@iisc.ac.in}}
}

\begin{document}
\maketitle

\begin{abstract}
Prescribed-time reach-avoid-stay (PT-RAS) specifications are crucial in applications requiring precise timing, state constraints, and safety guarantees. While control carrier functions (CBFs) have emerged as a promising approach, providing formal guarantees of safety, constructing CBFs that satisfy PT-RAS specifications remains challenging. In this paper, we present a novel approach using a spatiotemporal tubes (STTs) framework to construct CBFs for PT-RAS tasks. The STT framework allows for the systematic design of CBFs that dynamically manage both spatial and temporal constraints, ensuring the system remains within a safe operational envelope while achieving the desired temporal objectives. 
The proposed method is validated with two case studies: temporal motion planning of an omnidirectional robot and temporal waypoint navigation of a drone with obstacles, using higher-order CBFs.
\end{abstract}

%%%%%%%%%%%%%%%%%%%%%%%%%%%%%%%%%%%%%%%%%%%%%%%%%%%%%%%%%%%%%%%%%%%%%%%%%%%%%%%%
%%--------------------------------NEW SECTION---------------------------------%%
%%%%%%%%%%%%%%%%%%%%%%%%%%%%%%%%%%%%%%%%%%%%%%%%%%%%%%%%%%%%%%%%%%%%%%%%%%%%%%%%
\section{Introduction}

Prescribed-time reach-avoid-stay (PT-RAS) specifications are critical in applications that involve guiding a system to reach a desired state within a specified time, avoiding unsafe regions, and respecting state constraints \cite{Meng1}. PT-RAS tasks also serve as fundamental building blocks in the design of complex specifications \cite{Kloetzer, NAHS} for autonomous systems involving temporal and spatial constraints. 
%The design control strategies that can effectively satisfy PT-RAS tasks are extremely important in areas like robotics, autonomous vehicles, and aerospace, where ensuring reliability and safety with precise timing is critical.
Effective design control strategies ensuring PT-RAS task is crucial in applications like robotics, autonomous vehicles, and aerospace to ensure reliability and safety with precise timing.

Several control techniques have been proposed in the literature to address these specifications, including model predictive control (MPC) \cite{MPC} and potential field methods \cite{APF,APF_book}. While these approaches can handle time-bound tasks and obstacle avoidance, they often suffer from difficulty in ensuring safety guarantees over the entire mission duration. These limitations highlight the need for more efficient and reliable methods that can provide formal safety guarantees. Symbolic control techniques \cite{tabuada2009verification} have emerged as powerful tools for specifying and solving complex tasks. However, these techniques typically rely on state space abstraction, which can lead to increased computational complexity. On the other hand, nonlinear control methods, such as the funnel-based control approach \cite{PPC1}, offer a computationally efficient means of achieving specific tracking performance. Despite this, they struggle with addressing non-convex tasks like obstacle avoidance \cite{Funnel_STL} and handling input constraints.

Control barrier functions (CBFs) \cite{CBF,jagtap2020formal} have gained significant attention as a promising solution to some of these challenges. CBF-based techniques provide formal guarantees that a system will avoid unsafe states \cite{CBF_TA} while adhering to input constraints \cite{jagtap2020formal}. In the context of RAS specifications, several studies have integrated CBFs with well-known path-planning algorithms like Rapidly Exploring Random Trees (RRT) \cite{CBF-RRT1, CBF-RRT2, CBF-RRT3} and Artificial potential fields (APFs) \cite{CBF-APF} to generate safe motion plans in obstacle-laden environments. Additionally, the application of control (Lyapunov-) barrier functions \cite{Meng3, C3BF} for controller synthesis has been demonstrated to meet more general RAS specifications. 

However, the underlying bottleneck in ensuring safety using CBFs is the construction of these functions, particularly under nonlinear or non-convex constraints \cite{learn-better-CBF}, which are often inherent to PT-RAS specifications. Recent approaches have attempted to address these challenges by framing the CBF synthesis task as a machine-learning problem. From using supervised safe and unsafe data samples \cite{learn-CBF-1} to data from the expert demonstrations \cite{learn-CBF-ED}, data-driven learning techniques have been readily used to synthesize CBF. However, some methods lack formal correctness guarantees, while others suffer from increased computation complexity.

Further, designing CBFs that adhere to PT-RAS specifications presents a non-trivial challenge. The complexity of simultaneously managing the spatial and temporal aspects of PT-RAS tasks often leads to difficulties in formulating appropriate barrier functions that can dynamically adapt to the evolving state of the system. Time-varying CBFs \cite{TVCBF} have been identified as particularly useful for addressing PT-RAS specifications, as they adjust their constraints over time, allowing for the forward invariance of a time-varying set.

Moving away from learning-based techniques, in this paper, we propose a novel approach to address these challenges by leveraging the spatiotemporal tube (STT) framework \cite{STT} for the design of CBFs. The STT framework enables the systematic design of CBFs that ensure the system remains within a safe operational envelope while meeting the required temporal objectives. To illustrate the effectiveness of this approach, we present two case studies: an omnidirectional robot path planning task in 2D space using CBFs and temporal waypoint navigation of a drone in a 3D space with obstacles, employing higher-order CBFs (HO-CBFs) \cite{HOCBF}. 

%%%%%%%%%%%%%%%%%%%%%%%%%%%%%%%%%%%%%%%%%%%%%%%%%%%%%%%%%%%%%%%%%%%%%%%%%%%%%%%%
%%--------------------------------NEW SECTION---------------------------------%%
%%%%%%%%%%%%%%%%%%%%%%%%%%%%%%%%%%%%%%%%%%%%%%%%%%%%%%%%%%%%%%%%%%%%%%%%%%%%%%%%
\section{Preliminaries and Problem Formulation}
\label{sec:prelim}
\subsection{Notation}
The symbols $\N$, $\N_0$, $ \R$, $\R^+$, and $\R_0^+ $ denote the set of natural, whole, real, positive real, and nonnegative real numbers, respectively. 
A vector space of real matrices with $ n $ rows and $ m $ columns is denoted by  $ \R^{n\times m} $. A column vector with $n$ rows is represented by $ \R^{n}$.
% The Euclidean norm is represented using $\lVert \cdot \rVert$. For $a, b \in \mathbb{N}$ with $a \leq b$, the closed interval in $\mathbb{N}$ is denoted as $[a; b]$. 
%For $a, b \in \mathbb{R}$ with $a < b$, the open interval in $\mathbb{R}$ is denoted as $(a, b)$. 
A vector $x \in \mathbb{R}^{n}$ with entries $x_1, \ldots, x_n$ is represented as $[x_1, \ldots, x_n]^\top$, where $x_i \in \mathbb{R}$ denotes the $i$-th element of vector $x\in\mathbb{R}^n$ and $i \in [1;n]$.
%We use $I_n$ and $0_{n\times m}$ to denote identity matrix in $\R^{n\times n}$ and zero matrix in $\R^{n\times m}$, respectively. 
% A diagonal matrix in $\R^{n\times n}$ with diagonal entries $d_1,\ldots, d_n$ is denoted by $\textsf{diag}(d_1,\ldots, d_n)$.
Given a matrix $M\in\R^{n\times m}$, $M^\top$ represents the transpose of matrix $M$. 
% The power set of a set \textbf{A} is defined as $\mathcal{P}$(\textbf{A}).
Given $N \in \N$ sets $\X_i$, $i\in\left[1;N\right]$, we denote the Cartesian product of the sets by $\X=\prod_{i\in\left[1;N\right]}\X_i:=\{(x_1,\ldots,x_N)|x_i\in \X_i,i\in\left[1;N\right]\}$.
% The projection of a set $\X \subset \mathbb{R}^n$ onto the $i$-th dimension, where $i \in [1; n]$, is represented by an interval $[\X_{i,L}, \X_{i,U}] \subset \mathbb{R}$, where:
% $\X_{i,L} := \min\{x_i \in \mathbb{R} \mid [x_1, \ldots, x_n] \in \X\}$,
% $\X_{i,U} := \max\{x_i \in \mathbb{R} \mid [x_1, \ldots, x_n] \in \X\}$. \PJ{remove unnecessary notations!}

\subsection{System Definition}
We consider the following continuous-time nonlinear time-varying affine control system $\mathcal{S}$:
\begin{align}
    \mathcal{S}: \dot{x} = f(t,x) + g(t,x)u,
    \label{eq:sysdyn}
\end{align}
where $ x(t) \in \X \subset \mathbb{R}^n $ is the state of the system and $ u(t) \in \mathbb{R}^m $ is the control input to the system at time $t$. The compact set $\X$ defines the system's state space. The functions $ f: \R_0^+ \times \X \rightarrow \mathbb{R}^n $ and $ g:\R_0^+ \times \X \rightarrow \mathbb{R}^{n \times m} $ are locally Lipschitz continuous.
% Given, the problem of prescribed-time reach-avoid-stay in Definition \ref{def:ptras}, we consider solutions to \eqref{eq:sysdyn} with initial condition $x(0)$ \PJ{last line is abrupt!}
% \begin{assumption}
% The functions $ f: X \rightarrow \mathbb{R}^n $ and $ g: X \rightarrow \mathbb{R}^{n \times m} $ are locally Lipschitz continuous. \PJ{$f$ and $g$ are time-dependent, right? Also, no need to make it as assumption! It is a very standard thing for ensuring existence and uniqueness.}
% \end{assumption}

% \begin{figure*}[t]
%     \centering
%     \includegraphics[width=0.7\textwidth]{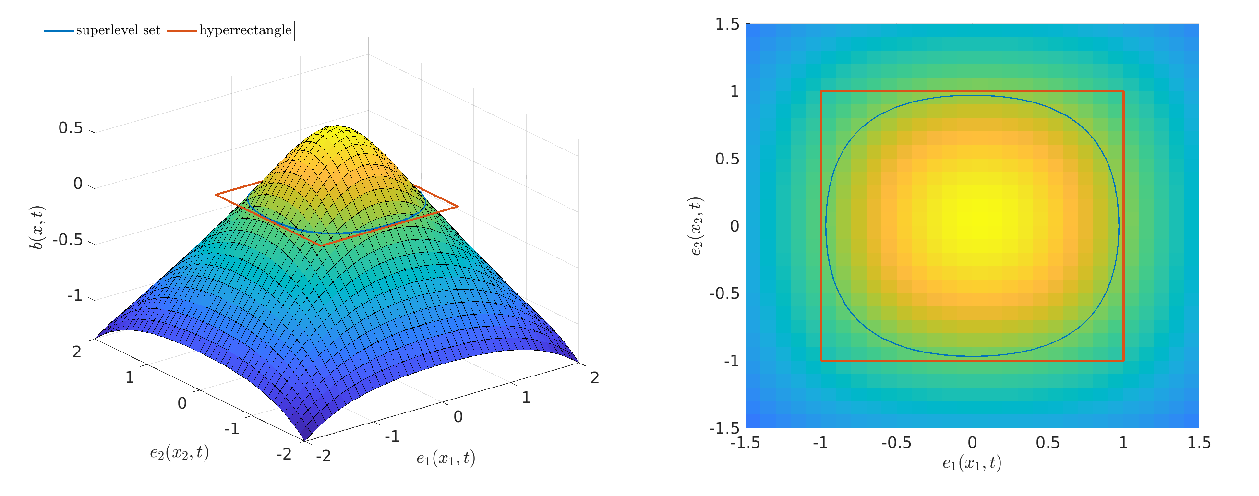}
%     \caption{Barrier Function for a two dimensional case with $a_1 = a_2 = 2$. At a given time instance $t$, the superlevel set of $b(t,x)$, $\Cs(t)$ is a subset of the hyperrectangle enclosed by the tubes.}
%     \label{fig:bf}
% \end{figure*}

\subsection{System Specification}
Let $\U = \bigcup_{j \in [1;n_u]} \Obs^j \subset \X$ be an unsafe set, where $n_u \in \N_0$ and $\Obs^j \subset \X$ is assumed to be compact and convex. Note that, in general, $\U$ can be nonconvex and disconnected. The compact connected sets $\So \subset \X \setminus \U$ and $\T \subset \X  \setminus \U$ represent the initial and target sets, respectively. If $\X$ is of any arbitrary shape, redefine the state space as $\hat{\X}:= \prod_{i \in [1;n]} [\underline{\X}_i, \overline{\X}_i]$ and expand the unsafe set as $\hat{\U} = \U \cup (\hat{\X} \setminus \X)$. Consider a prescribed-time reach-avoid-stay (PT-RAS) task defined in Definition \ref{def:ptras}.

\begin{definition}[Prescribed-Time Reach-Avoid-Stay Task]\label{def:ptras}
    Given a system $\mathcal{S}$, a state space $\X$, an unsafe set $\U$, an initial set $\So$, a target set $\T$, and a given initial position $x(0) \in \So$, there exists $t \in [0,t_c]$, s.t., $x(t) \in \T$ and for all $s \in [0,t_c], x(s) \in \X \setminus \U$, where $t_c \in \R^+$ is the prescribed-time.
\end{definition}
% \PJ{problem is missing!}

\subsection{Time-varying Control Barrier Functions}
\label{subsec:desc}
Time-varying control barrier functions (TV-CBFs) can address these PT-RAS specifications by introducing a flexible yet robust framework that dynamically adapts to changes over time.
% Given the PT-RAS task, we employ time-varying Control Barrier Functions (CBFs) to achieve these specifications by dynamically adjusting the constraints based on time.
% The concept of safety is formally defined by specifying a safe set within the state space, where the state of the system must remain to ensure safety. 
Following the approach in \cite{TVCBF}, we introduce a TV-CBF, defined by a differentiable function $b: {\left[0, t_c\right]} \times \X \rightarrow \mathbb{R}$ with $\X \in \mathbb{R}^n$. The time-dependent safe set $\Cs(t) \subseteq \X \subset \R^n$ is defined as the 0-superlevel set of $b(t,x)$, yielding
\begin{subequations}\label{eq:safeset}
\begin{align}
    \Cs(t) &= \{x \in \X | b(t,x) \geq 0\} \\
    \partial\Cs(t) &= \{x \in \X | b(t,x) = 0\} \\
    \text{Int}(\Cs(t)) &= \{x \in \X | b(t,x) > 0\},
\end{align}
\end{subequations}
where $\partial \Cs$ and Int$(\Cs)$ are the boundary and interior of the set  $\Cs$, respectively.
Next,  we present the conditions under which $b(t,x)$ fits as a candidate control barrier function.

\begin{definition}[Candidate Control Barrier Function \cite{CBF_STL}]
A differentiable function $ b : \left[0, t_c\right] \times \X \rightarrow \R $ is considered a candidate control barrier function if, for all $x_0 \in \Cs(0)$, there exists an absolutely continuous function $ x : [0, t_c] \rightarrow \R^n $ with $ x(0) := x_0 $ such that $ x(t) \in \Cs(t) $ for all $ t \in \left[0, t_c\right] $.
\end{definition}

\begin{definition}[Forward Invariant Set \cite{HOCBF}]
A set $\Cs(t)$ $\subseteq \X$ is forward invariant for system \eqref{eq:sysdyn} if starting from any $x(0) \in \Cs(0)$ the system's solution $ x(t) \in \Cs(t) $ for all $ t \in \left[0, t_c\right] $.
\end{definition}

For nonlinear affine systems (\ref{eq:sysdyn}), a candidate CBF is considered a valid CBF if it can be used to guarantee the forward invariance of the set $\Cs(t)$. 

\begin{definition}[Valid Control Barrier Function]
    Consider a continuous-time control-affine system $\mathcal{S}$ in \eqref{eq:sysdyn} and a safe set $\Cs(t) \subseteq \X$ in \eqref{eq:safeset} as a superlevel set of a continuous function $b: {\left[0, t_c\right]} \times \X \rightarrow \mathbb{R}$. The set $\Cs(t)$ is a controlled invariant set for the system $\mathcal{S}$ in \eqref{eq:sysdyn}, if there exists a class $\mathcal{K}$ function, $\alpha: \R \rightarrow \R$, such that, for all $ (t,x) \in [0, t_c] \times \X $, 
    % \PJ{$f$ and $g$ are time-dependent in (1), nut not in (3)...check everywhere! }
    \begin{equation}
    \label{eq:CBF}
    \sup_{u} \left( \frac{\partial b(t,x)}{\partial x}^\top (f(t,x) + g(t,x)u) + \frac{\partial b(t,x)}{\partial t} \right) \geq -\alpha(b(t,x)).
    \end{equation}    
\end{definition}

However, given a PT-RAS task, coming up with the time-varying CBF $b(t,x)$ is particularly challenging. Next, we formally define the problem considered in this work.

\begin{problem}\label{prob1}
   Given the system $\mathcal{S}$ in \eqref{eq:sysdyn} and a PT-RAS task as defined in Definition \ref{def:ptras}, the objective is to design a TV-CBF, $b: {\left[0, t_c\right]} \times \X \rightarrow \mathbb{R}$, that guarantees the system's safe operation within the prescribed time frame $[0,t_c]$. Consequently, derive a control law $u(t,x)$ such that the solution $x: \R_0^+ \rightarrow \R^n$ to \eqref{eq:sysdyn}, starting from any initial condition $x(0) \in \So$, satisfies the PT-RAS specifications.
\end{problem}

\section{STT-based Construction of TV-CBF}
This section introduces the mathematical framework STT-CBF, to generate TV-CBFs using spatiotemporal tubes (STTs).

\subsection{Spatiotemporal Tubes (STT)} 
To design the time-varying CBF, this paper employs the concept of STTs \cite{STT}. The STTs serve as dynamic structures, effectively capturing the system's evolving spatial constraints and temporal requirements. They offer a robust solution to designing TV-CBFs that direct the system toward its target within the prescribed time while maintaining safety by avoiding unsafe sets.

\begin{definition}[Spatiotemporal Tubes for PT-RAS]\label{def:stt}
Given a PT-RAS task in Definition \ref{def:ptras}, the STT is defined as a time-varying interval $[\gamma_{i,L}(t), \gamma_{i,U}(t)]$, where $\gamma_{i,L}:\R_0^+ \rightarrow \R$ and $\gamma_{i,U}:\R_0^+ \rightarrow \R$ are continuously differentiable functions with $\gamma_{i,L}(t) < \gamma_{i,U}(t)$ for all time $t \in [0,t_c]$ and for each dimension $i \in [1;n]$, and the following holds:
\begin{align}\label{eqn:sttd}
    &\prod_{i=1}^n [\gamma_{i,L}(t), \gamma_{i,U}(t)] \subseteq \X, \forall t \in [0,t_c], \nonumber \\
    &\prod_{i=1}^n [\gamma_{i,L}(0), \gamma_{i,U}(0)] \subseteq \So, \
    \prod_{i=1}^n [\gamma_{i,L}(t_c), \gamma_{i,U}(t_c)] \subseteq \T, \nonumber \\
    &\prod_{i=1}^n 
    [\gamma_{i,L}(t), \gamma_{i,U}(t)] \cap \U = \emptyset, \forall t \in [0,t_c].
\end{align}
\end{definition}
As discussed in \cite{STT}, three steps are primarily involved in the generation of spatiotemporal tubes, (i) designing reachability tubes to guide the system trajectory from the initial set $\So$ to the target set $\T$, (ii) introducing the circumvent function to avoid the unsafe set $\U$, and (iii) dynamically adjusting the tubes around the unsafe region through the adaptive framework.

Thus, we can enforce the PT-RAS specification by constraining the state trajectory within the spatiotemporal tubes as:
\begin{gather}
    \gamma_{i,L}(t) < x_i(t) < \gamma_{i,U}(t), \forall (t,i) \in [0,t_c] \times [1;n]. \label{eqn:ppc}
\end{gather}

\subsection{STT-CBF framework}
To construct a TV-CBF utilizing the STT framework, we first define the normalized time-varying error within the tubes as $e(t,x) = [e_1(t, x_1), \ldots, e_n(t, x_n)]$, where
\begin{equation} \label{eq:normerr}
e_i(t, x_i) = \frac{x_i - \frac{1}{2}(\gamma_{i,U}(t) + \gamma_{i,L}(t))}{\frac{1}{2}(\gamma_{i,U}(t) - \gamma_{i,L}(t))}, \forall i \in [1;n]   
\end{equation}
is the error in $i$th dimension. Note that the error $e_i(t, x_i)$ satisfies the following statements:
\begin{itemize}
    \item $e_i(t, x_i) \in (-1, 1)$ if the $i$th position is between the lower and upper boundaries of the tube in $i$th dimension, i.e.,
    $$e_i(t, x_i) \in (-1, 1), \text{ iff } x_i(t) \in (\gamma_{i,L}(t), \gamma_{i,U}(t))$$
    \item $e_i(t, x_i)$ approaches $-1$ and $1$ as the $x_i$ approaches the lower and upper boundary of the tube in $i$th dimension, respectively, i.e., 
    $$\lim_{x_i(t) \rightarrow \gamma_{i,L}(t)}e_i(t, x_i) = -1, \
    \lim_{x_i(t) \rightarrow \gamma_{i,U}(t)}e_i(t, x_i) = 1.$$
\end{itemize}

Now, we define our candidate TV-CBF $b(t,x)$ as
\begin{equation}
b(t,x) = 1 - \frac{1}{a_1} \ln \left(\sum_{i=1}^{n} e^{a_1 e_i(t, x_i) \tanh(a_2 e_i(t, x_i))}\right), a_1, a_2 \in \mathbb{R}^+,
\label{eq:barrier_function}
\end{equation}
where $e_i(t, x_i)$ is the normalized error in \eqref{eq:normerr}, and satisfies the following property,
\begin{equation*}
    b(t,x) = 
    \begin{cases} 
    \geq 0 & \text{iff } e_i(t, x_i) \in (-1, 1), \forall i \in [1;n] \\
    < 0 & \text{otherwise.}
    \end{cases}    
\end{equation*}

% We now show that the proposed TV-CBF candidate \eqref{eq:barrier_function} is indeed a valid CBF.
% \RD{Should I move this proof to case studies?}
% \begin{theorem}
%     Given the system dynamics \eqref{eq:sysdyn}, the proposed CBF candidate \eqref{eq:barrier_function}, with $e$ defined in \eqref{eq:normerr}, is a valid CBF defined for the set $\X$.
% \end{theorem}
% \begin{proof}
%     Taking derivative of \eqref{eq:barrier_function} w.r.t $x$ yields,
%     \begin{align*}
%         \frac{\partial b}{\partial x} = 
%     \end{align*}
% \end{proof}

\begin{theorem}
Consider the TV-CBF $b(t,x)$ in \eqref{eq:barrier_function} and its 0-superlevel set $\Cs(t)$ as defined in \eqref{eq:safeset}. If the system's initial state satisfies $x(0) \in \Cs(0)$ and $\Cs(t)$ is forward invariant under the dynamics in \eqref{eq:sysdyn}, then the system satisfies the PT-RAS task.
\end{theorem}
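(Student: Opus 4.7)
The plan is to chain together three layers of implications: forward invariance of $\Cs(t)$ gives a pointwise bound on the normalized errors, that bound places $x(t)$ inside the spatiotemporal tube, and the definition of the STT then directly yields each requirement of the PT-RAS specification.

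First I would verify that the hypothesis $x(0) \in \Cs(0)$ is consistent with the STT construction: by the property of $b(t,x)$ stated just above the theorem, $x(0) \in \Cs(0)$ is equivalent to $e_i(0, x_i(0)) \in (-1,1)$ for all $i \in [1;n]$, which by the equivalence displayed after \eqref{eq:normerr} is the same as $x_i(0) \in (\gamma_{i,L}(0), \gamma_{i,U}(0))$. Combined with Definition~\ref{def:stt}, which requires $\prod_{i=1}^{n}[\gamma_{i,L}(0), \gamma_{i,U}(0)] \subseteq \So$, this already guarantees $x(0) \in \So$.

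Next I would use the forward invariance hypothesis: by Definition of forward invariance, any trajectory starting in $\Cs(0)$ satisfies $x(t) \in \Cs(t)$ for all $t \in [0,t_c]$, i.e., $b(t,x(t)) \geq 0$. Applying the sign property of $b$ again, this yields $e_i(t,x_i(t)) \in (-1,1)$ for every $i$ and every $t \in [0,t_c]$, which is precisely $x_i(t) \in (\gamma_{i,L}(t), \gamma_{i,U}(t))$, so $x(t) \in \prod_{i=1}^{n}[\gamma_{i,L}(t), \gamma_{i,U}(t)]$ for every $t \in [0,t_c]$.

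Finally, I would read off the three PT-RAS requirements directly from the three tube inclusions in \eqref{eqn:sttd}: (i) containment in the state space, $\prod_i[\gamma_{i,L}(t), \gamma_{i,U}(t)] \subseteq \X$, gives $x(t) \in \X$ on $[0,t_c]$; (ii) disjointness from the unsafe set, $\prod_i[\gamma_{i,L}(t), \gamma_{i,U}(t)] \cap \U = \emptyset$, gives $x(t) \notin \U$ on $[0,t_c]$; and (iii) the terminal inclusion $\prod_i[\gamma_{i,L}(t_c), \gamma_{i,U}(t_c)] \subseteq \T$ gives $x(t_c) \in \T$, thereby witnessing the existential $t \in [0,t_c]$ demanded by Definition~\ref{def:ptras}. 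I expect no serious obstacle in the argument itself — the entire claim reduces to concatenating definitions — so the main care will be stating the equivalence $x(t) \in \Cs(t) \iff x(t) \in \prod_i(\gamma_{i,L}(t), \gamma_{i,U}(t))$ cleanly so that the boundary/strict-inequality distinction between $\Cs(t)$ and its interior does not become a distraction when invoking the STT inclusions.
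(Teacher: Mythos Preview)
Your proposal is correct and follows essentially the same approach as the paper: both arguments use the sign property of $b$ to translate $x(t)\in\Cs(t)$ into $e_i(t,x_i)\in(-1,1)$ and hence $x(t)\in\prod_i(\gamma_{i,L}(t),\gamma_{i,U}(t))$, then invoke the STT inclusions \eqref{eqn:sttd} to conclude. Your version is slightly more explicit in checking the initial condition $x(0)\in\So$ and in spelling out each clause of Definition~\ref{def:ptras} separately, whereas the paper simply states that containment in the tube ``ensures the PT-RAS task is satisfied.''
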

\begin{proof}
The 0-superlevel set $\Cs(t)$ represents the set of safe states at time $t$, defined as the points in the state space where the barrier function $b(t,x)$ is non-negative: 
$$x(t) \in \Cs(t) \implies b(t,x) \geq 0.$$
From \eqref{eq:normerr}, it follows that for each state variable $x_i(t)$, if $b(t,x) \geq 0$ then the error $e_i(t,x_i)$ lies within the interval $(-1,1)$. Consequently, the state variable $x_i(t)$ is confined to the interval $(\gamma_{i,L}(t), \gamma_{i,U}(t))$ for each $i \in [1;n]$:
\begin{align*}
    b(t,x) \geq 0, &\implies e_i(t, x_i) \in (-1, 1) \implies x_i(t) \in (\gamma_{i,L}(t), \gamma_{i,U}(t)), \forall i \in [1;n].    
\end{align*}
Therefore, if the system's trajectory remains within the superlevel set $\Cs(t)$, the state $x(t)$ is guaranteed to lie within the hyperrectangle bounded by the spatiotemporal tubes (STTs) at time $t$:
$$x(t) \in \Cs(t) \implies x(t) \in \prod_{i=1}^{n}(\gamma_{i,L}(t), \gamma_{i,U}(t)).$$ 
Given that $\Cs(t)$ is forward invariant under the system dynamics \eqref{eq:sysdyn}, we have:
$$\forall x(0) \in \Cs(0), x(t) \in \Cs(t) \subseteq \prod_{i=1}^n (\gamma_{i,L}(t), \gamma_{i,U}(t)).$$ 
Thus, the trajectory $x(t)$ is enclosed within the STT, ensuring the PT-RAS task is satisfied.
\end{proof}

\begin{theorem}
The time-varying control barrier function $b(t,x)$, defined in \eqref{eq:barrier_function}, is differentiable with respect to $t$ and $x$.
\end{theorem}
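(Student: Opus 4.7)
The plan is to establish differentiability by decomposing $b(t,x)$ into elementary building blocks and invoking the chain rule, so the proof reduces to verifying that each piece is smooth on the relevant domain and that the compositions never leave the domains on which the outer operations are smooth.

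First, I would check that the normalized coordinate $e_i(t, x_i)$ from \eqref{eq:normerr} is continuously differentiable in $(t, x_i)$. By Definition \ref{def:stt}, the tube bounds $\gamma_{i,L}$ and $\gamma_{i,U}$ are continuously differentiable, and crucially $\gamma_{i,U}(t) - \gamma_{i,L}(t) > 0$ for all $t \in [0, t_c]$. Hence the denominator in \eqref{eq:normerr} is bounded away from zero on the compact interval $[0, t_c]$. Since $e_i$ is affine in $x_i$ and a quotient of $C^1$ functions of $t$ with non-vanishing denominator, the quotient rule gives $e_i \in C^1([0, t_c] \times \R)$ with explicit partials $\partial e_i/\partial x_i$ and $\partial e_i/\partial t$.

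Next, I would propagate this regularity through the outer operations in \eqref{eq:barrier_function}. The map $s \mapsto s \tanh(a_2 s)$ is $C^\infty$ on $\R$ because $\tanh$ is smooth, so each exponent $a_1 e_i(t, x_i) \tanh(a_2 e_i(t, x_i))$ is $C^1$ as a composition of a smooth map with a $C^1$ map. The exponential is smooth, and a finite sum of $C^1$ functions is $C^1$, so $\sum_{i=1}^{n} e^{a_1 e_i(t, x_i) \tanh(a_2 e_i(t, x_i))}$ is $C^1$ in $(t,x)$. Moreover, this sum is strictly positive everywhere (each summand is a positive exponential), so it lies in $(0, \infty)$, where $\ln$ is smooth. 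Thus $b(t,x)$, as a composition of $C^1$ and smooth maps, is continuously differentiable in both $t$ and $x$, and its partials $\partial b/\partial t$ and $\partial b/\partial x$ can be written explicitly via the chain rule.

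There is no genuine obstacle here; the statement is essentially chain-rule bookkeeping. The only two points worth flagging are (i) the strict inequality $\gamma_{i,L}(t) < \gamma_{i,U}(t)$ from Definition \ref{def:stt}, which keeps the denominator of $e_i$ bounded away from zero, and (ii) the positivity of a sum of exponentials, which keeps the argument of $\ln$ inside its domain of smoothness. With these two observations in place, differentiability of $b$ with respect to $t$ and $x$ follows immediately.
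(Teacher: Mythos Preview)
Your proposal is correct and follows essentially the same approach as the paper: both arguments invoke the continuous differentiability of the tube bounds from Definition~\ref{def:stt} and the strict inequality $\gamma_{i,L}(t) < \gamma_{i,U}(t)$ to conclude that $e_i$ is $C^1$, and then observe that $b$ is a composition of differentiable functions. Your version is slightly more thorough in that you explicitly verify the positivity of the sum of exponentials so that $\ln$ stays within its domain of smoothness, a point the paper leaves implicit.
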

\begin{proof}
By Definition \ref{def:stt}, the bounds of the spatiotemporal tubes, $\gamma_L(t)$ and $\gamma_U(t)$, are continuously differentiable functions of time $t$. Consequently, from \eqref{eq:normerr}, as $\gamma_{i,U}(t) > \gamma_{i,L}(t), \forall (t,i) \in \R_0^+ \times [1;n]$, the normalized error $e(t, x)$ is differentiable with respect to both $t$ and $x$.
Since $b(t,x)$ is a composition of differentiable functions \eqref{eq:barrier_function}, it is also differentiable with respect to both $t$ and $x$.
\end{proof}

To illustrate the construction of TV-CBF from STTs, consider a two-dimensional example in Figure \ref{fig:bf}. The figure shows the 0-superlevel set $\Cs(t)$ of $b(t,x)$ at a certain time instance $t$, representing the region in the state space where the barrier function $b(t,x)$ is non-negative, indicating the set of safe states at that moment.

We can further observe that $\Cs(t)$ is a subset of the hyperrectangle defined by the STTs at time $t$. This containment guarantees that if the system's state remains within the safe set $\Cs(t)$, it simultaneously adheres to the tube constraints, thereby ensuring the satisfaction of the PT-RAS specifications. 
% \PJ{we also need to show that $b(t,x)$ is differentiable!}
\begin{figure}[h!]
    \centering
    \includegraphics[width=0.8\textwidth]{}
    \caption{Barrier function for a two dimensional case with $a_1 = a_2 = 2$. At a given time instance $t$, the superlevel set of $b(t,x)$, $\Cs(t)$ is a subset of the hyperrectangle enclosed by the tubes.}
    \label{fig:bf}
\end{figure}

\begin{figure*}[t]
     \centering
     \begin{subfigure}[b]{\textwidth}
         \centering
         \includegraphics[width=\textwidth]{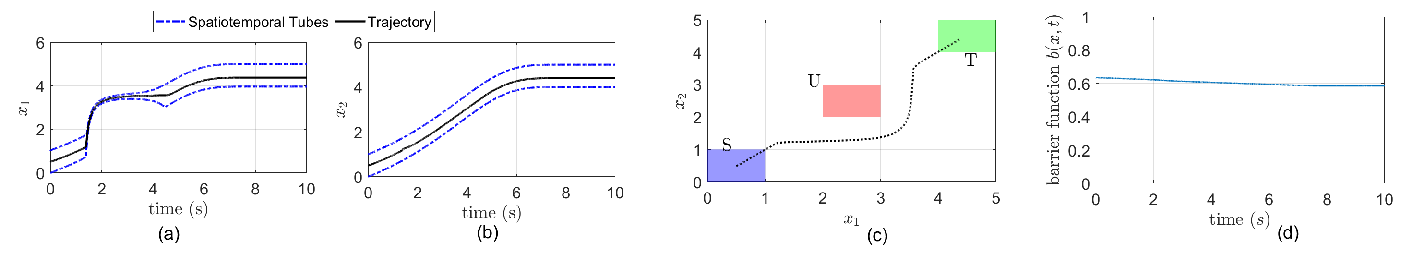}
     \end{subfigure}
     \hfill
     \begin{subfigure}[b]{\textwidth}
         \centering
         \includegraphics[width=\textwidth]{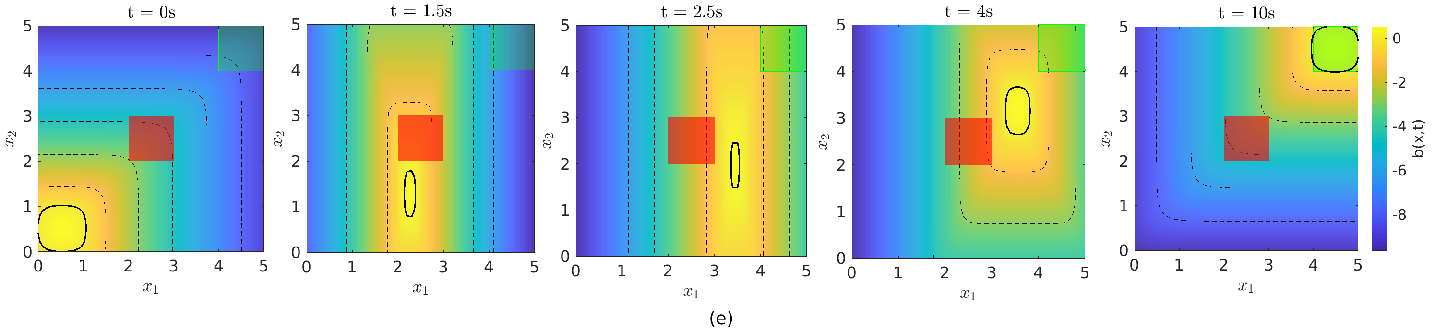}
     \end{subfigure}
        \caption{(a),(b) Presents the spatiotemporal tubes in $x_1$ and $x_2$ dimensions respectively, (c) Controlled trajectory, (d) $b(t,x) > 0$ for all time,(e) Evolution of $b(t,x)$ with time, where the superlevel set $\Cs(t)$ is depicted in black solid line.}
        \label{fig:demoSTTCBF}
\end{figure*}

\subsection{Controller Design}
We now formulate a quadratic programming (QP) problem that ensures $\Cs(t)$ is forward invariant, provided $b(t,x)$ is a valid CBF. 
%\begin{subequations}
\begin{align}
&\min_{u}u^T K u \notag\\
\text{s.t. } &-\mathcal{L}_g b(t,x) u \leq \mathcal{L}_fb(t,x) + \frac{\partial b(t,x)}{\partial t} + \alpha(b(t,x)),\label{eq:QP}
\end{align}
%\end{subequations}
where $K \in \R^{n \times n}$ is a positive semi-definite matrix. 

Given a set $\Cs(t)$ associated with the TV-CBF $b(t,x)$ in \eqref{eq:barrier_function}, the control law $u^*(t,x)$ obtained by solving the QP \eqref{eq:QP} renders the set $\Cs(t)$ forward invariant. Consequently, 
$$x(t) \in \Cs(t) \subset \prod_{i=1}^{n} (\gamma_{i,L}(t), \gamma_{i,U}(t)), \forall t \in [0, t_c],$$
which leads to,
\begin{subequations}
\begin{align}
&x(0) \in \Cs(0) \subset \prod_{i=1}^{n} (\gamma_{i,L}(0), \gamma_{i,U}(0)) \subset \So, \\
&x(t_c) \in \Cs(t_c) \subset \prod_{i=1}^{n} (\gamma_{i,L}(t_c), \gamma_{i,U}(t_c)) \subset \T, \\
&\forall t \in [0, t_c], x(t) \subset \X \text{ and } x(t) \cap \U(t) = \emptyset.
\end{align}
\end{subequations}    
Hence, the STT-CBF controller $u^*(t,x)$ guarantees the satisfaction of the PT-RAS specification in Definition \ref{def:ptras}.

Figure \ref{fig:demoSTTCBF} illustrates a two-dimensional case, where the control law $u^*(t,x)$ obtained by solving the QP directs the trajectory towards the target $\T$ while circumventing the unsafe set $\U$. The evolution of the time-varying CBF $b(t,x)$ and its corresponding superlevel set $\Cs(t)$ is also shown in the Figure. 

\begin{remark}
Up to this point, we have focused on the application of the STT-CBF framework for generating a valid CBF in systems where the relative degree is 1. However, the STT-CBF framework also excels in handling complex dynamical systems with relative degrees greater than 1, where control inputs influence higher-order derivatives of the state variables rather than the state itself. This advantage is demonstrated in the second case study, where we apply the STT-CBF controller to a multi-UAV payload system with relative degree $\eta > 1$ — defined as the number of times the barrier function $b(t,x)$ must be differentiated with respect to $x$ along the system dynamics in \eqref{eq:sysdyn} before the control input $u$ explicitly appears.
\end{remark}

\begin{figure*}
    \centering
    \includegraphics[width=\textwidth]{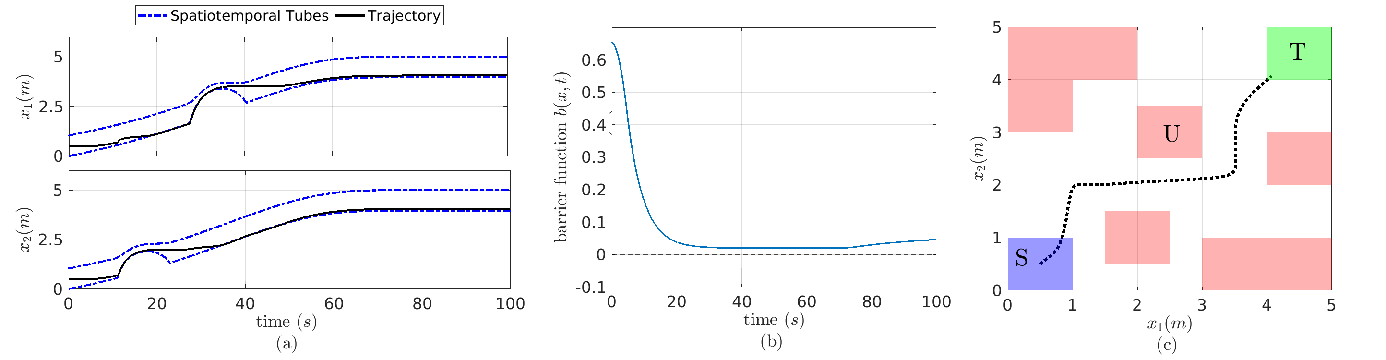}
    \caption{(a) STTs, (b) Barrier function $b(t,x)$, (c) Controlled trajectory, for the mobile robot case}
        \label{fig:cs1}
\end{figure*}

\begin{figure*}
     \centering
     \begin{subfigure}[b]{0.42\textwidth}
         \centering
         \includegraphics[width=\textwidth]{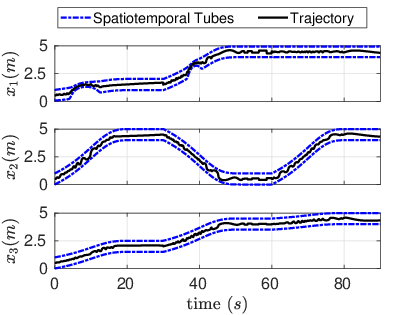}
         \caption{}
     \end{subfigure}
     \hfill
     \begin{subfigure}[b]{0.42\textwidth}
         \centering
         \includegraphics[width=\textwidth]{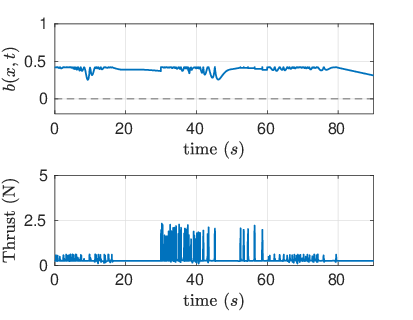}
         \caption{}
     \end{subfigure}
     \hfill
     \begin{subfigure}[b]{0.42\textwidth}
         \centering
         \includegraphics[width=\textwidth]{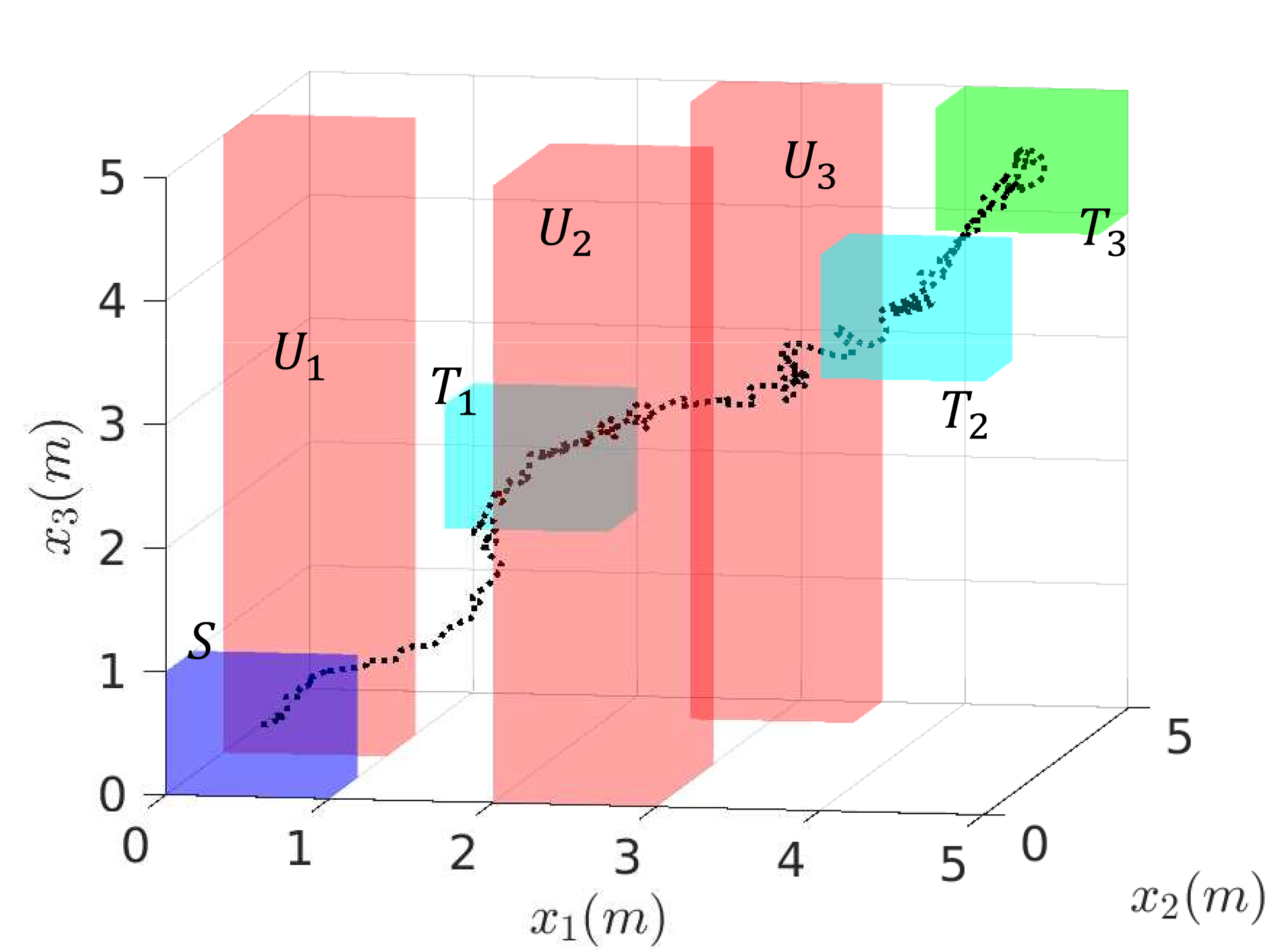}
         \caption{}
     \end{subfigure}
     \hfill
     \begin{subfigure}[b]{0.42\textwidth}
         \centering
         \includegraphics[width=\textwidth]{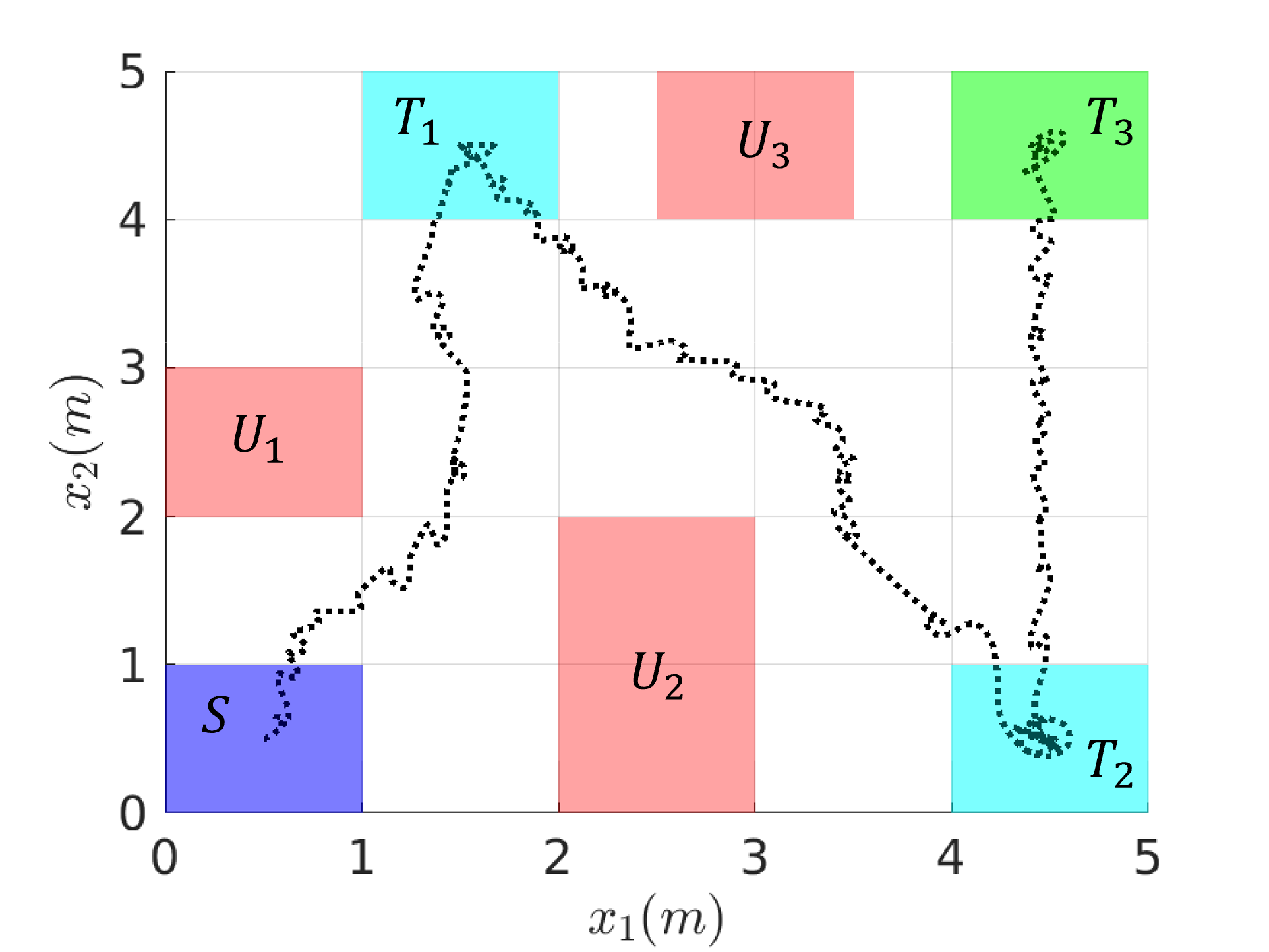}
         \caption{}
     \end{subfigure}
        \caption{(a) STTs, (b) Barrier function $b(t,x)$ and Thrust, (c), (d) Controlled trajectory, for the multi-UAV case.}
        \label{fig:demoSTTCBF_Drone}
\end{figure*}

\section{Case Studies}
We demonstrate the effectiveness of the proposed approach on two different systems: a two-dimensional mobile robot and a multi-UAV payload system operating in a three dimensional environment.
\subsection{Mobile Robot}
In this case study, we consider a mobile robot operating in a 2D environment, as shown in Figure \ref{fig:cs1}(c). The vehicle dynamics have been adopted from \cite{NAHS}:
\begin{align}
    \begin{bmatrix}
        \dot{x}_1 \\ \dot{x}_2 \\ \dot{x}_3
    \end{bmatrix}
    = 
    \begin{bmatrix}
        \cos{x_3} & -\sin{x_3} & 0 \\ \sin{x_3} & \cos{x_3} & 0 \\ 0 & 0 & 1
    \end{bmatrix}
    \begin{bmatrix}
        v_1 \\ v_2 \\ \omega
    \end{bmatrix} + w(t),
\end{align}
where the state vector $x = [x_1, x_2, x_3]^\top$ represents the robot's position and orientation, $[v_1, v_2, \omega]^\top$ is the velocity input in the robot's frame, and $w$ is an unknown disturbance.

The STT-CBF controller ensures that the system trajectory initiating at $x(0) \in \So := [0,1] \times [0,1]$, reaches the target set $\T = [4,5] \times [4,5]$, within prescribed time $t_c = 8s$, while avoiding unsafe set $\U$ (red regions in Figure \ref{fig:cs1}(c)) and staying within state space $\X = [0,5] \times [0,5]$.

Given the PT-RAS specification, the obtained spatiotemporal tubes are shown in Figure \ref{fig:cs1}(a). Figure \ref{fig:cs1}(b) shows that the value of the time-varying CBF $b(t,x)$ is greater than $0$ for all time. 
Finally, the control effort $u^*(t,x)$ and the controlled trajectory satisfying the PT-RAS specification is shown in Figure \ref{fig:cs1}(c).

\subsection{Multi-UAV Payload System using HO-CBF}
Another significant advantage of the STT-CBF approach over the STT framework \cite{STT} is the ability to address more complicated dynamics than fully actuated affine control systems, as illustrated in the following case study. This section discusses using the STT-CBF controller for more complex systems with relative degree $\eta > 1$, such as UAVs. 

Solving PT-RAS specifications is critical for the safe and efficient operation of UAVs. For example, in drone delivery services, it is essential to ensure that the drone reaches its destination within a specified time frame while avoiding obstacles such as buildings or trees. Meeting these timing and safety constraints is vital for reliable service. 
Similarly, in a multi-UAV payload system \cite{withSTL}, where multiple drones collaborate to transport a large or delicate load, adhering to PT-RAS specifications ensures that each drone reaches specific waypoints at the correct time while avoiding collisions, and maintaining the stability of the payload. This level of coordination and precision is crucial for the success of such operations, making PT-RAS solutions indispensable in these applications. 

The latter case study of the multi-UAV payload system will be explored in detail in the following section. To establish forward invariance of the time-varying set $\Cs(t)$, for the affine system in \eqref{eq:sysdyn} with $\eta > 1$, we leverage \textit{higher-order and time-varying control barrier functions} (HO-CBF) \cite{TVCBF, HOCBF}.

% Authors in \cite{withSTL} discuss a different approach to control the Multi-UAV Payload System using Signal Temporal Logic (STL) and barrier functions which also follow spatial and temporal specifications but cannot avoid any unsafe sets.

\begin{definition}[High-Order Control Barrier Function]
A function $ b : \left[0, t_c\right] \times \X \rightarrow \R $ is a candidate HO-CBF of relative degree $\eta$ for system in \eqref{eq:sysdyn} if there exist differentiable class $\mathcal{K}$  functions $\alpha_i, \, i \in \{1, \ldots, m\}$ such that 
\begin{align}
    \sup_{u\in{U}} &[ \mathcal{L}^\eta_f b(t,x) +  \mathcal{L}_g \mathcal{L}_f^{\eta-1} b(t,x)u + \frac{\partial^\eta b(t,x)}{\partial t^\eta} \nonumber \\ 
    &+ \alpha_\eta(\psi_{\eta-1} (t,x)) + O(b(t,x))] \geq 0,
    \label{eq:hocbf-constraint}
\end{align}
where
\begin{align*}
    O(b(t,x)) &= \sum_{i=1}^{\eta-1} \mathcal{L}^i_f(\alpha_{\eta-i} \circ \psi_{\eta-i-1}) + \frac{\partial^i\left( \alpha_{\eta-i} \circ \psi_{\eta-i-1} \right)}{\partial t^i}, \\
    \psi_i(t,x) &= \dot{\psi}_{i-1}(t,x) + \alpha_i\left(\psi_{i-1}(t,x) \right), \forall i = \{1, \ldots, \eta\}
\end{align*}
with $\psi_0(t,x) = b(t,x)$.
\end{definition}

The feedback-linearized multi-UAV payload system has a relative degree of $\eta = 2$ \cite{withoutSTL}. Consequently, the HO-CBF $b(t,x)$, must satisfy the condition given by \eqref{eq:hocbf-constraint}, which is simplified as:
\begin{align}
    \begin{split}
    -\left(\mathcal{L}_g\mathcal{L}_fb\right)u \leq & \mathcal{L}_f^2b + \frac{\partial^2b}{\partial t^2} + 2k\mathcal{L}_fb + 2k\frac{\partial b}{\partial t} + k^2b\hspace{-0.2em}\label{eq:HOCBF_constraint_final}
    \end{split}
\end{align}
where the class $\mathcal{K}$ functions $\alpha_i(s) := ks, k \in \R^+$ for all $s\in\mathbb{R}_0^+$ and $\psi_0 = b$. The controller architecture will be similar to (\ref{eq:QP}). The control input generated from the STT-CBF controller is converted to the force input using feedback linearization relation, as mentioned in \cite{withSTL}. Once the force vectors that need to be produced by each UAV are found, the thrust vectoring control approach is used to provide those to the system \cite{withoutSTL}.

We simulate the waypoint navigation problem in a Gazebo environment \href{https://tinyurl.com/2psawnjm}{(Video Link)}, where obstacles are placed between waypoints and the prescribed time to reach each waypoint is set to 25 s. The simulation results are shown in Figure \ref{fig:demoSTTCBF_Drone}. 
% \PJ{Can we have video? Also the prescribed times needs to be mention for specification!}

\subsection{Discussion and Comparison}
To the best of the authors' knowledge, this is the first time that the construction of CBFs has been addressed for reach-avoid-stay specifications within a prescribed time frame. 

While the STT framework \cite{STT} has been used to solve PT-RAS tasks, it is limited to fully actuated control-affine systems and tends to produce excessively high control inputs, which can render it impractical for real-world applications. In contrast, the STT-CBF framework optimizes control effort, thereby achieving the PT-RAS specification with significantly less control effort, as demonstrated in our comparison of the two frameworks using a mobile robot in the first case study. The results, highlighted in Figure \ref{fig:uc}, clearly show a substantial reduction in the supremum of control effort, dropping from 13.0211 m/s to 0.7880 m/s, when using the STT-CBF approach.

\begin{figure}[t]
    \centering
    \includegraphics[width=\textwidth]{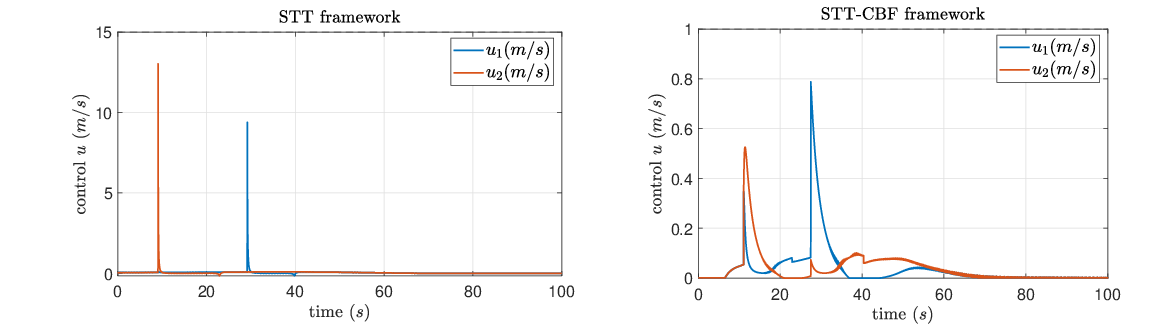}
    \caption{Comparing control effort of STT with STT-CBF.}
    \label{fig:uc}
\end{figure}
% \begin{table}[h!]
% \centering
% \caption{Maxmimum Control Effort of STT vs STT-CBF}
% \begin{tabular}{l|l|l|}
% \cline{2-3}
%  & \textbf{STT} & \textbf{STT-CBF} \\ \hline
% \multicolumn{1}{|l|}{$u_1$} & 9.3817 m/s & 13.0211 m/s \\ \hline
% \multicolumn{1}{|l|}{$u_2$} & 0.7880 m/s & 0.5255 m/s \\ \hline
% \end{tabular}
% \label{tab:uc}
% \end{table}

In traditional CBF-based techniques, the number of safety constraints typically scales with the number of obstacles in the environment, with each obstacle introducing an additional constraint that must be incorporated into the control law,  thus significantly increasing the computational burden and complexity of the control design \cite{learn-CBF-1}. The STT framework, on the other hand, encapsulates the entire set of obstacles within a single spatiotemporal constraint. This unified constraint reduces the computational complexity involved in computing the control input, making the STT-CBF approach advantageous in environments with a large number of obstacles.

Moreover, learning-based techniques \cite{learn-CBF-ED, learn-CBF-1, learn-CBF-2} for constructing CBFs often fall short in addressing prescribed-time performance requirements. Thus, the STT-CBF framework does not only eliminate the computational complexity associated with machine learning algorithms, providing a scalable, analytic method for deriving CBFs, but also the time-varying CBFs produced by the STT-CBF framework accurately capture the prescribed time requirements with formal correctness guarantees. Table \ref{tab:comp} accurately summarizes the advantages of the proposed STT-CBF architecture over existing algorithms in the literature. 
% \PJ{Our RAL paper RL for STL can solve this problem but no formal guarantees dynamics with $\eta>1$ is also possible...worth adding that paper instead of [30]!}

\begin{table}[t]
\centering
\begin{threeparttable}
% \centering
\caption{Comparing STT-CBF with classical algorithms}
\begin{tabular}{lcccccccc}
\hline
\textbf{Algorithm} & \multicolumn{2}{c}{\textbf{\begin{tabular}[c]{@{}l@{}}Optimal \\ Control \end{tabular}}} & \multicolumn{2}{c}{\textbf{\begin{tabular}[c]{@{}l@{}}Formal \\ Guarantee \end{tabular}}} & \multicolumn{2}{c}{\textbf{\begin{tabular}[c]{@{}l@{}}PT-RAS \end{tabular}}} & \multicolumn{2}{c}{\textbf{\begin{tabular}[c]{@{}l@{}}Dynamics \\ with $\eta >1$ \end{tabular}}} \\ \hline
RRT*\cite{RRTs} & \multicolumn{2}{c}{-\tnote{1}} & \multicolumn{2}{c}{\xmark} & \multicolumn{2}{c}{-\tnote{2}} & \multicolumn{2}{c}{-\tnote{1}} \\
MPC\cite{MPC} & \multicolumn{2}{c}{\cmark} & \multicolumn{2}{c}{\xmark} & \multicolumn{2}{c}{\xmark} & \multicolumn{2}{c}{\xmark} \\ 
RL\cite{RLFunnel} & \multicolumn{2}{c}{\xmark} & \multicolumn{2}{c}{\xmark} & \multicolumn{2}{c}{\xmark} & \multicolumn{2}{c}{\cmark} \\
Symbolic Control\cite{tabuada2009verification} & \multicolumn{2}{c}{\xmark} & \multicolumn{2}{c}{\cmark} & \multicolumn{2}{c}{\xmark} & \multicolumn{2}{c}{\cmark} \\
STT\cite{STT} & \multicolumn{2}{c}{\xmark} & \multicolumn{2}{c}{\cmark} & \multicolumn{2}{c}{\cmark} & \multicolumn{2}{c}{\xmark} \\
CBF \cite{learn-CBF-1} & \multicolumn{2}{c}{\cmark} & \multicolumn{2}{c}{\cmark} & \multicolumn{2}{c}{\xmark} & \multicolumn{2}{c}{\cmark} \\
STT-CBF (proposed) & \multicolumn{2}{c}{\cmark} & \multicolumn{2}{c}{\cmark} & \multicolumn{2}{c}{\cmark} & \multicolumn{2}{c}{\cmark} \\
\hline
\end{tabular}

\label{tab:comp}
\begin{tablenotes}
    \item [1] Additional mechanisms like PID and MPC are required for control.
    \item [2] Additional mechanisms are required to satisfy prescribed time requirements.
\end{tablenotes}
\end{threeparttable}
\end{table}

\section{Conclusion}
In this paper, we tackle the challenge of constructing a CBF for a PT-RAS task. We utilize the STT framework, which enables the systematic design of time-varying CBFs to guarantee the satisfaction of the PT-RAS specification. Our proposed approach is validated through two case studies: a 2D omnidirectional robot path planning task, where we demonstrate the reduction in control effort using the STT-CBF framework. The second case study focuses on a temporal waypoint navigation task with obstacles for multiple UAVs, employing HO-CBFs. This study shows that the STT-CBF framework is not only applicable to more complex systems but also significantly reduces the computational complexity of CBF construction compared to data-driven approaches. These findings highlight the efficiency and scalability of the STT-CBF approach in addressing PT-RAS tasks.

\bibliographystyle{unsrt} % We choose the "plain" reference style
\bibliography{sources} % Entries are in the refs.bib file

\end{document}